\title{Bridging the Gap Between Central and Local Decision-Making:\\
The Efficacy of Collaborative Equilibria in Altruistic Congestion Games}
\author{{Bryce L. Ferguson, Dario Paccagnan, Bary S. R. Pradelski, and Jason R. Marden}
\thanks{This research was supported by ONR Grant \#N00014-20-1-2359, AFOSR Grant \#FA95550-20-1-0054 and \#FA9550-21-1-0203}%
\thanks{B. L. Ferguson (corresponding author) and J. R. Marden are with the Department of Electrical and Computer Engineering, University of California, Santa Barbara, CA, {\texttt{\{blferguson,jrmarden\}@ece.ucsb.edu}}.}%
\thanks{D. Paccagnan is with the Department of Computing, Imperial College London, {\texttt{\{d.paccagnan\}@imperial.ac.uk}}.}%
\thanks{B. S. R. Pradelski  is with the National Center for Scientific Research (CNRS), {\texttt{\{bary.pradelski@cnrs.fr\}@cnrs.fr}}.}%
}
\begin{document}
\maketitle 

\begin{abstract}
Congestion games are popular models often used to study the system-level inefficiencies caused by selfish agents, typically measured by the price of anarchy.
One may expect that aligning the agents' preferences with the system-level objective--altruistic behavior--would improve efficiency, but recent works have shown that altruism can lead to more significant inefficiency than selfishness in congestion games.
In this work, we study to what extent the localness of decision-making causes inefficiency by considering collaborative decision-making paradigms that exist between centralized and distributed in altruistic congestion games.
In altruistic congestion games with convex latency functions, the system cost is a super-modular function over the player's joint actions, and the Nash equilibria of the game are local optima in the neighborhood of unilateral deviations.
When agents can collaborate, we can exploit the common-interest structure to consider equilibria with stronger local optimality guarantees in the system objective, e.g., if groups of $k$ agents can collaboratively minimize the system cost, the system equilibria are the local optima over $k$-lateral deviations.
Our main contributions are in constructing tractable linear programs that provide bounds on the price of anarchy of collaborative equilibria in altruistic congestion games.
Our findings bridge the gap between the known efficiency guarantees of centralized and distributed decision-making paradigms while also providing insights into the benefit of inter-agent collaboration in multi-agent systems.
\end{abstract}


\section{Introduction}
In many multi-agent systems, each agent's self-interested decisions drive the overall system behavior. Examples include individual vehicles in transportation networks, producers and consumers in supply chains and power grids, or servers and other computational resources in cloud computing.
In each of these settings, the resulting system performance can be sub-optimal relative to what is feasibly attainable~\cite{Lazar2018,Perakis2007,epstein2012price}.
This inefficiency can primarily be attributed to the fact that agents 1) make decisions in their own self-interest (not in line with the system-level objective) and 2) make decisions locally without coordinating their behavior.
Many works have studied how to reduce this inefficiency by altering agents' objectives (e.g., with monetary incentives~\cite{Maille2012} or utility design~\cite{makabe2022utility}), but far fewer have closely studied the effect of localized decision-making has on this inefficiency.

In the context of atomic congestion games~\cite{Rosenthal1973}, it is well known that the Nash equilibria need not be optimal with respect to the system-level objective of minimizing total congestion~\cite{Christodoulou2005,Koutsoupias2009}.
Many works study how to quantify this inefficiency via the price of anarchy ratio~\cite{Roughgarden2012} and how to reduce it by altering agents' preferences using monetary~\cite{Paccagnan2021I,Bilo2016,Vijayalakshmi2020} or informational~\cite{Castiglioni2020} incentives.
Surprisingly, when agents' costs are directly aligned with the system objective, the price of anarchy can be greater than if agents were purely self-interested~\cite{caragiannisTaxesLinearAtomic2006,Paccagnan2021I}, that is, altruism can worsen inefficiency.
Further, when agents are restricted to making decisions locally, there are fundamental limits to our capabilities in reducing inefficiency~\cite{Paccagnan2021a}.
In a variety of engineered systems, new technologies enable us to think about paradigms where decisions are not fully local, such as ride-sharing platforms with human and autonomous vehicles~\cite{chau2022DecentralizedRideSharingVehiclePooling}, smart grids with active `prosumers'~\cite{bistaffa2012decentralised}, and fleets with robot-to-robot communication~\cite{das2019tarmac}.
To help understand the benefits that new communication technologies and market platforms can provide, we study the opportunities presented when going beyond local decision-making and allowing for partial coordination among agents in the system.

This work focuses on bridging the gap between centralized and distributed performance by studying \emph{collaborative decision-making paradigms}.
A collaborative decision-making environment is defined by: 1) which agents are enabled to collaborate and 2) how a group makes a decision.
The manner in which the agents of a multi-agent system can collaborate is context-dependent;
To range different levels of collaboration, we consider collaborative structures in which groups of up to $k$ agents can form to update their group strategy.
Among many possible forms of collaboration, we consider when members of a group share a common-interest objective, such that they cooperatively optimize a single objective function over their group actions; when the agent objective and system objective are aligned, we recover the notion of altruistic decision-making~\cite{chen2014AltruismItsImpact} and extend it to the collaborative setting.
By varying $k$ between 1 (fully distributed) and the total number of agents (fully centralized), we can sweep through different levels of collaboration.


\noindent\textit{Related Work - }
Coalitions in games have been studied in many contexts, including bargaining~\cite{tang2018balanced},
cost sharing~\cite{hoefer2013StrategicCooperationCost}, or team formation~\cite{bullinger2020ComputingDesirablePartitions}.
In each of these settings, agents join or stay in coalitions when it is favorable for the individual.
In this work, we consider collaborative coalitions where groups form not because they intrinsically prefer being in the group but because the group possesses a greater ability to alter the system behavior in the group's favor.
This is most closely related to the study of \textit{strong Nash equilibria}, in which groups of agents deviate their group action only if it is beneficial for each member of the group~\cite{aumann1959acceptable}.
Much work has gone into verifying the existence of $k$-strong Nash equilibria~\cite{nessah2014existence,harks2009strong} and providing methods to find them~\cite{holzman1997strong,gatti2017verification,clempner2020finding}.
Recent works have further studied the price of anarchy with respect to strong Nash equilibria in cost-sharing games~\cite{epsteinStrongEquilibriumCost2007}, smooth games~\cite{bachrachStrongPriceAnarchy2014}, job scheduling and resource allocation games~\cite{fiat2007strong,andelmanStrongPriceAnarchy2009,ferguson2023CollaborativeDecisionMakingKStrong}, and even congestion games~\cite{chien2009strong}.
However, these results are limited to more restricted problem classes or rely on the number of players being unbounded; further, many of these bounds are not tight, and in many settings, $k$-strong Nash equilibria need not exist.

In this work, we guarantee the existence of $k$-strong Nash equilibria by considering the special case of common-interest collaborations.
The authors' previous work~\cite{ferguson2023CollaborativeDecisionMakingKStrong} applied this approach to resource allocation problems; in this work, we apply similar techniques to altruistic congestion games.
In this work, we consider a new, more general model with a richer class of system objective functions and garner new insights on the effects of collaboration in an entirely different setting, connecting different areas of the literature.
Specifically, in congestion games, it has been shown that agents' altruistic behavior can actually increase the price of anarchy (i.e., worsen system performance); this finding points to the fact that inefficiency is caused by the localness of decision-making as much as the (mis)alignment of agent and system objectives.
Our results provide tools to understand the benefit of collaborative decision-making environments that go beyond local decision-making and help us to bridge the gap between centralized and distributed performance.
We provide tractable linear programs that provide bounds on the price of anarchy of $k$-strong Nash equilibria.
\cref{fig:spoa_grid} illustrates how these findings can be used to bridge the gap in performance guarantees of local and central decision-making in altruistic congestion games.



\section{Preliminaries}
Throughout, let $[n]:=\{1,\ldots,n\}$ and $\binom{n}{k}:=\frac{n!}{k!(n-k)!}$.

\subsection{Collaborative Decision-Making}\label{subsec:prelim_collab}
Consider a multi-agent system where $N = \{1,\ldots,n\}$ denotes a finite set of agents.
Each agent $i \in N$ has a set of available actions or strategies $\strat_i \in \Strats_i$.
When each agent has made a selection of their strategy, the group behavior is denoted by the joint strategy $\strat = (\strat_1,\ldots,\strat_n) \in \Strats = \Strats_1\times\cdots\times\Strats_n$.
The overall system's performance is dictated by the actions of each of the agents; let $C:\Strats \rightarrow \mathbb{R}_{\geq 0}$ denote the system cost function over joint strategies.
The system operator would like to induce a joint strategy that minimizes the system cost, i.e.,
$$\Opt{\strat} \in \argmin_{\strat \in \Strats} C(\strat).$$
Though these system states are ideal, it may be difficult for the system operator to attain them due to
the large scale and limited connectivity of the system,
an inability to directly control each agent's behavior,
or the possible intractability of computing an optimal joint strategy.
In light of this, we consider that agents make decisions in a decentralized manner and seek to quantify the system cost of the emerging joint strategies.

A candidate paradigm for decentralized decision-making is to distribute the strategy selection task to each local agent fully.
\sloppy Though this greatly reduces the communication and computational burdens, the joint strategy that agents settle on can be far from optimal~\cite{Koutsoupias2009}.
Further, with the advancement of new sensing and communication technologies, it's feasible that agents could form connections and cooperate in selecting their actions.
This could take the form of designed agents forming collaborative groups (e.g., self-driving cars entering platoons~\cite{ahangar2021survey}) or strategic agents colluding their behavior (e.g., human Taxi/Uber drivers influencing transit supply and pricing~\cite{chau2022DecentralizedRideSharingVehiclePooling}).
To understand the effects of coalitional decision-making on multi-agent system performance, this work studies the efficiency of equilibria that emerge from \emph{collaborative decision-making paradigms}.


A collaborative decision-making paradigm is defined by two parts: 1) which agents are enabled to collaborate, and 2) how a group of agents selects their strategies.
To the first point, let $\Combos \subseteq 2^N$ denote the collaboration set, dictating which groups of agents can collaborate on their actions.
In general, the form of this set will be context-dependent (e.g., agents communicating over a network or over local broadcasts), and the groups of agents enabled to collaborate will alter the behavior of the overall system: larger groups can have a bigger impact on the joint strategy, and agents existing in multiple groups can facilitate cascading effects of collaboration.

\begin{figure*}
    \centering
    \vspace{10pt}
    \includegraphics[width=0.9825\textwidth]{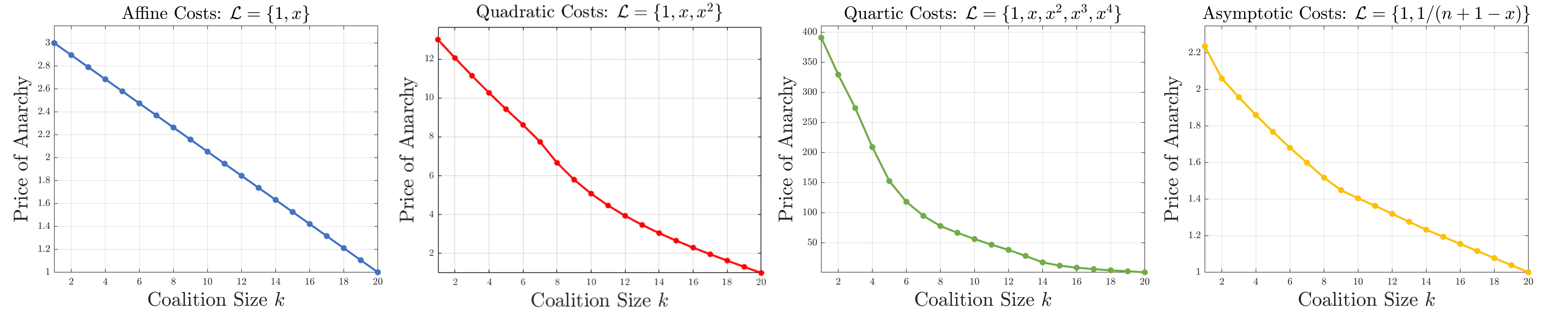}
    \caption{Price of Anarchy with different collaborative decision-making paradigms in various classes of congestion games. In each plot is the \kstrong price of anarchy within the respective class of altruistic congestion games with latency functions formed by linear combinations of the functions in $\Latset$. The bounds are generated by the linear programs in \cref{thm:alt}. In each setting, increasing the amount of collaboration ($k$) improves the equilibrium efficiency guarantee.}
    \label{fig:spoa_grid}
    \vspace{-10pt}
\end{figure*}

The other factor determining the behavior of a collaborative multi-agent system is how groups select actions.
We will consider that a group $\Gamma \in \Combos$ seeks to minimize their group cost function $J_\Gamma:\Strats \rightarrow \mathbb{R}_{\geq 0}$; that is, given some behavior of the remaining agents, $\strat_{-\Gamma}$, a best response for the group $\Gamma$ is to select an action in $\argmin_{\strat_\Gamma \in \Strats_\Gamma}J_\Gamma(\strat_\Gamma,\strat_{-\Gamma}).$
In this work, we consider \textit{common-interest collaboration}, in which each group $\Gamma$ minimizes a common objective function $\Phi$, i.e., $J_\Gamma(\strat) =\Phi(\strat)$ for all $\Gamma \in \Combos$.
At several points in this work, we will take a particular focus on the special case of common-interest collaboration where the agents' objectives are directly aligned with the system objective (i.e., $\Phi(\strat) = C(\strat)$), termed altruistic collaboration.
We will focus on the system behavior that emerges from such collaborative architectures by studying the stable states of collaborative decision-making, which can reveal the efficacy of system paradigms that go beyond distributed decision-making.

In a multi-agent system $G = (N,\Strats,C)$ with the common-interest collaborative decision-making structure $(\Combos,\Phi)$, a joint strategy $\strat^{\rm EQ}$ is an equilibrium if each group $\Gamma \in \Combos$ is simultaneously best responding to the current action, i.e.,
\begin{equation}\label{eq:def_gen_eq}
\Phi(\strat^{\rm EQ}) \leq \Phi(\strat_\Gamma^\prime,\strat_{-\Gamma}^{\rm EQ}),~\forall\strat_\Gamma^\prime \in \Strats_\Gamma,~\Gamma \in \Combos.
\end{equation}
That is, no group is incentivized to deviate from their group strategy in these states.
This definition is similar to that of the Nash equilibrium in game theory~\cite{nash1950equilibrium}, and indeed when $\Combos = \{i\}_{i \in N}$, the states satisfying \eqref{eq:def_gen_eq} are Nash equilibria of the common-interest game; further, if $\Phi$ is a potential function for some potential game, then the set of Nash equilibria and the set of states that satisfy \eqref{eq:def_gen_eq} are equivalent~\cite{Monderer1996}.
In this work, we seek to go beyond local decision-making and consider the effects of collaborative architectures $\mathcal{H}$ and how they alter the set of system equilibria.
A central observation of this work is that the collaborative equilibria that satisfy \eqref{eq:def_gen_eq} need not optimize the system-level objective $C$; we study the quality of collaborative equilibria to understand the change in system performance under different collaboration structures.

To quantify the efficiency of a collaborative design, we consider the \emph{price of anarchy} as the ratio of the system cost in the worst-case equilibrium and the optimal system cost.
$$\poa = \frac{\max_{\strat^{\rm EQ} \in {\rm EQ}} C(\strat^{\rm EQ})}{\min_{\Opt{\strat} \in \Strats}C(\Opt{\strat})} \geq 1,$$
where ${\rm EQ} \subseteq \Strats$ denotes the set of all\footnote{A joint strategy satisfying \eqref{eq:def_gen_eq} need not exist (i.e., ${\rm EQ} = \emptyset$). In Proposition~\ref{prop:exist}, we show existence with common-interest collaboration.} joint strategies satisfying \eqref{eq:def_gen_eq}.
By changing the collaborative decision-making structure, we alter the set ${\rm EQ}$ and attain different price of anarchy guarantees.

We will particularly focus on how the price of anarchy changes as the amount of collaboration increases.
To do so, throughout this work, we consider the specific collaborative structure in which each group of agents up to size $k$ can collaborate in choosing their group action, i.e., a group $\Gamma$ can select the group strategy $\strat_\Gamma \in \Strats_\Gamma = \prod_{i \in \Gamma} \Strats_i$ if they are in the collaboration set $\Combos_N^{[k]} = \bigcup_{\zeta = 1}^k \Combos_N^\zeta$, where $\Combos_N^k = \{\Gamma \subset N \mid |\Gamma|=k\}$ is the set of groups containing exactly $k$ agents.
We focus on this collaborative structure as it presents a symmetry among the agents that is useful in our worst-case analysis, and by varying $k$ between $1$ and $n$, we can range between fully distributed and fully centralized decision-making architectures.

With the particular collaborative structure $(\Combos_N^{[k]},\Phi)$, the states that satisfy \eqref{eq:def_gen_eq} are the \kstrong Nash equilibria of the common-interest game~\cite{aumann1959acceptable}.
We will denote the set of all such states by $\KSNE(\Phi) \subseteq S$.
Many works have studied the existence of \kstrong Nash equilibria \cite{nessah2014existence,harks2009strong} and provide methods to find them~\cite{holzman1997strong,gatti2017verification,clempner2020finding} in general and specific classes of games.
In this work, we address the quality of these states in congestion games with common-interest collaboration.
We will denote by $\spoa_k(G,\Phi)$ the \kstrong price of anarchy in the multi-agent system $G$ with common-interest objective $\Phi$.
We will seek to quantify this equilibrium performance metric in the class of atomic congestion games.

\subsection{Congestion Games}\label{subsec:congestion}
We consider the effects of collaborative decision-making in atomic congestion games.
Let $\mathcal{E}$ be a set of resources to be used by a set of agents $N$.
Each agent $i \in N$ can select a set of resources $\strat_i \subseteq \mathcal{E}$ from a constrained subset of the power set of resources $\Strats_i \subseteq 2^\mathcal{E}$.
As more agents share the use of a resource, its congestion and associated costs increase.
Let $|\strat|_e = |\{i \in N \mid e \in \strat_i\}|$ denote the number of agents utilizing resource $e$ in the joint strategy $\strat$ and 
let $\ell_e:\{0,\ldots,n\}\rightarrow\mathbb{R}_{\geq 0}$ denote the latency function that models the cost of congestion for resource $e$.
In the classic setting of local, selfish routing, an individual's cost is the total latency of the resources they use, i.e., $J_i(\strat) = \sum_{e \in \strat_i} \ell_e(|\strat|_e)$.
Alternatively, the system operator seeks to minimize the total congestion in the system; as such, we let 
$$C(\strat) = \sum_{e \in \mathcal{E}} |\strat|_e\ell_e(|\strat|_e),$$
define the system-level objective, which captures the total cost observed by all agents.
We will consider a congestion game $G = (N,\mathcal{E},S,\{\ell_e\}_{e \in \mathcal{E}})$ as the multi-agent system of focus for the remainder of this work.


In congestion games, the most predominantly studied agent decision-making model is that of selfish routing~\cite{Roughgarden2002}.
When each agent minimizes their own observed cost $J_i(\strat) = \sum_{e \in \strat_i} \ell_e(|\strat|_e)$, it is well known that the emergent equilibrium (Nash equilibrium) can be sub-optimal~\cite{Koutsoupias2009}.
In the case of affine congestion games, when agents minimize their own observed costs, the price of anarchy can be as large as 2.5~\cite{Christodoulou2005}.
This may lead one to believe that inefficiency is caused by the misalignment of users' self-interested preferences and the system-level objective; as such, one would expect that when agents are altruistic and seek to minimize the total congestion (i.e., $J_i(\strat)=C(\strat)$), the price of anarchy would decrease, but in fact, it increases to 3~\cite{caragiannisTaxesLinearAtomic2006,Paccagnan2021I}.
The best-designed objective for local decision-making can reduce the price of anarchy guarantee only as low as approximately 2.012~\cite{Paccagnan2021I}.
These findings indicate that a significant source of inefficiency is not solely caused by the misalignment of objectives but also by the local nature of decision-making.
When collaborative decision-making is introduced, new questions emerge, such as 1) How does the price of anarchy decrease with collaboration in each group decision-making framework? and 2) what level of collaboration is needed for altruistic behavior to outperform local, selfish behavior?
The tools provided in this work help to answer these questions; \cref{fig:spoa_grid} illustrates our findings in several classes of congestion games.


\section{Altruistic Collaboration}\label{sec:altr}
In this section, we study the effects of collaborative altruism on equilibrium efficiency; we focus this in the class of altruistic congestion games in which each group of agents $\Gamma \in \Combos \subset 2^N$ has the objective of minimizing the system cost function, i.e., $J_\Gamma(\strat) = C(\strat) = \sum_{e \in \mathcal{E}} \lvert\strat\rvert_e\ell_e(\lvert\strat\rvert_e)$.
With the particular collaborative structure $\Combos_N^{[k]}$ in which groups of up to $k$ agents can form collaborative groups, the equilibria of the system are states where no group $\Gamma \in \Combos_N^{[k]}$ can deviate their action to reduce the system objective.
In the context of common interest games, this is precisely the set of \kstrong Nash equilibria.
\begin{definition}\label{def:def_ksnash}
    In an altruistic congestion game $(G,C)$ a joint strategy $\KSNash{\strat} \in \Strats$ is a \kstrong Nash equilibrium if
    \begin{equation}\label{eq:def_ksnash}
        C(\KSNash{\strat}) \leq C(\strat_\Gamma^\prime,\KSNash{\strat}_{-\Gamma}),\quad\forall\Gamma\in\Combos_N^{[k]}.
    \end{equation}
\end{definition}
Recall $\KSNE(C)$ as the set of all such \kstrong Nash equilibria in the game $(G,C)$.
We highlight the explicit definition of \kstrong Nash equilibria in altruistic congestion games as (1) it aligns with the more general framework of collaborative decision-making introduced in \cref{subsec:prelim_collab}, and (2) it demonstrates the fact that \kstrong Nash equilibria are the local optima of the system cost function in the neighborhood of $k$-lateral deviations.
In the special case where $C$ is super- or sub-modular, we can think of these results as extending the existing literature on greedy approximation algorithms in sub-modular (or often super-modular) maximization~\cite{xu2022ResourceAwareDistributedSubmodular}.

The main contributions of this work are in understanding the quality of \kstrong Nash equilibria.
Before doing so, we show that such equilibria are guaranteed to exist.
\begin{proposition}\label{prop:exist}
For any altruistic congestion game $(G,C)$, for any $k \in [n]$, there exists a \kstrong Nash equilibrium.
\end{proposition}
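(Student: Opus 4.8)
The plan is to exploit the defining feature of altruistic collaboration---that every admissible coalition optimizes the \emph{same} objective $C$---so that a global minimizer of $C$ is automatically a \kstrong Nash equilibrium for every $k$, and existence then follows from finiteness of the strategy space.

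First I would note that the joint strategy space $\Strats = \Strats_1 \times \cdots \times \Strats_n$ is finite, since $N$ is finite and each $\Strats_i \subseteq 2^{\mathcal{E}}$ is finite. Hence $C$ attains its minimum on $\Strats$, so the set $\argmin_{\strat \in \Strats} C(\strat)$ is nonempty; fix any $\Opt{\strat}$ in it. Next I would verify directly that $\Opt{\strat}$ satisfies \cref{eq:def_ksnash}: for any coalition $\Gamma \in \Combos_N^{[k]}$ and any group strategy $\strat_\Gamma^\prime \in \Strats_\Gamma$, the deviated profile $(\strat_\Gamma^\prime, \Opt{\strat}_{-\Gamma})$ is again an element of $\Strats$, so global optimality of $\Opt{\strat}$ gives $C(\Opt{\strat}) \leq C(\strat_\Gamma^\prime, \Opt{\strat}_{-\Gamma})$. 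Since $\Gamma$ and $\strat_\Gamma^\prime$ were arbitrary, $\Opt{\strat} \in \KSNE(C)$, which is therefore nonempty.

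Although the argument above already establishes the claim, I would add a short remark tying it back to the framework of \cref{subsec:prelim_collab}: because the coalitions share the common objective $C$, the function $C$ serves as an exact potential for the common-interest game induced by $(\Combos_N^{[k]}, C)$, so any sequence of strictly improving coalitional deviations strictly decreases $C$ and hence must terminate---at a \kstrong Nash equilibrium. The main obstacle here is essentially absent; the only point requiring care is that the reasoning uses that the shared objective is precisely $C$ (altruism), and the identical argument with $\Phi$ in place of $C$ proves the more general common-interest existence statement referenced in the footnote to the definition of $\poa$.
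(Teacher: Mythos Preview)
Your argument is correct and matches the paper's proof essentially line for line: both pick a global minimizer of $C$ on the finite joint strategy space and observe that any coalitional deviation lies in $\Strats$, so global optimality immediately yields the \kstrong Nash condition. Your added justification of finiteness and the potential-function remark are sound elaborations but not substantive departures.
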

\begin{proof}
The claim is shown by observing that the agent cost minimizers are \kstrong\ Nash equilibria for all $k \in [n]$.
Let $\hat{\strat} \in \argmin_{\strat \in \Strats}C(\strat)$ be a joint strategy that minimizes the system cost; by definition
$C(\Opt{\strat}) \leq C(\strat) ~\forall\strat \in \Strats.$
The set of $k$-lateral deviations from $\hat{\strat}$ is $\hat{\Strats} = \{(\strat_\Gamma^\prime,\hat{\strat}_{-\Gamma}) \mid \strat_\Gamma \in \Strats_\Gamma,\Gamma \in \Combos_N^{[k]}\}$.
As $\hat{\Strats} \subseteq \Strats$ for all $k \in [n]$, \eqref{eq:def_ksnash} holds from the optimality conditions.
\end{proof}

Though \kstrong equilibria exist, they need not be optimal in the system objective $C$.
In light of this, we consider the \kstrong price of anarchy as the ratio of how well a \kstrong Nash equilibrium approximates the optimal state.
Let,
$$\spoa_k(G,C) = \frac{\max_{\KSNash{\strat}\in\KSNE} C(\KSNash{\strat})}{\min_{\Opt{\strat} \in \Strats} C(\Opt{\strat})},$$
denote the \kstrong price of anarchy of an altruistic congestion game $(G,C)$.
As this ratio approaches 1, the quality of a collaborative equilibrium approaches the optimal.
Said differently, in the context of altruistic congestion games, as we increase the level of collaboration through the parameter $k$, agents reach equilibria with lower system costs.
By studying the \kstrong price of anarchy, we can quantify how going beyond local decision-making and considering paradigms between centralized and distributed can provide us with performance improvements without requiring full coordination.

To gain broader insights across many problem instances, we are particularly interested in understanding equilibrium efficiency across classes of congestion games.
Let $\Latset = \{\ell^1,\ldots,\ell^L\}$ denote a set of basis latency functions and $\gee_\Latset^n$ denote the set of congestion games with at most $n$ agents and resources that have latencies of the form $\ell_e = \sum_{j=1}^L \alpha_e^j \ell^j$.
We assume each basis latency function maps to the non-negative real numbers.
Clearly, any specific game $G$ can be found in the class $\gee^{|N|}_{\{\ell_e\}_{e \in \mathcal{E(G)}}}$; however, many congestion games of interest exist among several well-studied classes of games such as linear congestion games where $\Latset = \{1,x\}$~\cite{Caragiannis2010}, Polynomial congestion games where $\Latset = \{1,x,\ldots,x^D\}$ where $D$ is some highest order polynomial modeled~\cite{Aland2006,Bilo2016}, and exponential where $\Latset = \{e^{\alpha_1x},\ldots,e^{\alpha_mx}\}$ \cite{chien2009strong}.
We extend the definition of \kstrong price of anarchy to provide a bound over a class of problems, i.e.,
\begin{equation}
    \spoa_k(\gee_\Latset^n) = \max_{(G,C) \in \gee_\Latset^n} \spoa_k(G,C).
\end{equation}

In \cref{thm:alt} we provide tractable linear programs that provide lower and upper bounds on the \kstrong price of anarchy for any $\gee_\Latset^n$.
For notational convenience, let $c^j(x) = x\ell^j(x)$, and without loss of generality let $\Latset$ define a set of local cost functions $\{c^j\}_{j \in |\Latset|}$.
\begin{theorem}\label{thm:alt}
For the class of altruistic congestion games $\gee_\Latset^n$, the \kstrong price of anarchy satisfies
\begin{equation}\label{eq:thm_alt}
    \min_{\zeta \in [k]}\left\{1/P_\zeta^\star(n,\Latset)\right\} \geq \spoa_k(\gee_\Latset^n) \geq \max_{c \in \Latset}\left\{ 1/Q_k(n,c)\right\},
\end{equation}
where
\begin{align}
&P_\zeta^\star(n,\Latset) = \max_{\rho\geq\nu \geq 0} ~~~ \rho \nonumber\\
&{\rm s.t.} \hspace{8pt}0 \leq c(b\hs +\hs x)-\rho c(a\hs +\hs x) + \nonumber\\
& \hs\hs\hs\left( \hs \hs \hs \hs \binom{n}{\zeta} c(a\hs +\hs x) \hs -\hs  \hs \hs \hs \sum_{\substack{0\leq\psi\leq a\\ 0\leq\omega\leq b\\ \psi+\omega \leq \zeta }}\hs  \hs {a\choose \psi} \hs {b \choose \omega} \hs \binom{n\hs - \hs a\hs - \hs b}{\zeta \hs - \hs \psi \hs - \hs \omega} c (a\hs +\hs x\hs +\hs \beta\hs -\hs \alpha)  \hs \hs \hs \right)\nonumber\\
&\hspace{120pt}\forall (a,x,b) \in \mathcal{I},~c \in \Latset.\tag{P$\zeta$}\label{opt:ub}
\end{align}
and
\begin{align}
Q_k^\star(n,c) =& \min_{\theta \in \mathbb{R}^{|\mathcal{I}|}_{\geq 0}} \quad \sum_{a,x,b}c(b+x)\theta(a,x,b)\nonumber\\
~~{\rm s.t.} & \sum_{a,x,b}c(a+x)\theta(a,x,b)=1\nonumber\\
 &0 \geq \sum_{a,x,b}\theta(a,x,b)\hs \Bigg(\hs \hs \binom{n}{\zeta}c(a \hs + \hs x)\nonumber\\
 & -\sum_{\substack{0\leq\psi\leq a\\ 0\leq\omega\leq b\\ \psi+\omega \leq \zeta }} \hs \hs {a\choose \psi} \hs {b \choose \omega} \hs\binom{n\hs \hspace{-0.5pt} - \hspace{-0.5pt} \hs a\hs \hspace{-0.5pt} -\hspace{-0.5pt}  \hs b}{\zeta \hs \hspace{-0.5pt} - \hspace{-0.5pt} \hs \psi \hs \hspace{-0.5pt} - \hspace{-0.5pt} \hs \omega} c (a\hs +\hs x\hs +\hs \omega\hs -\hs \psi) \hs\hs  \Bigg) \nonumber \\
 & \hspace{100pt}\forall\zeta \in \{1,\ldots,k\}\tag{Q$_k$}\label{opt:lb}
\end{align}
\end{theorem}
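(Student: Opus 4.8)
The plan is to establish the two bounds in \eqref{eq:thm_alt} via a primal-dual smoothness argument applied to the equilibrium inequality \eqref{eq:def_ksnash}, specialized to altruistic congestion games. The key structural fact is that a $k$-strong Nash equilibrium must resist every $\zeta$-lateral deviation for each $\zeta \in [k]$; hence each value $\zeta \le k$ contributes its own family of inequalities, and taking the minimum over $\zeta$ in \eqref{opt:ub} (resp. the constraint over all $\zeta \in \{1,\dots,k\}$ in \eqref{opt:lb}) reflects that the equilibrium simultaneously satisfies all of them.

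For the \textbf{upper bound} ($\spoa_k \le \min_\zeta 1/P_\zeta^\star$), I would fix an equilibrium $\KSNash{\strat}$, an optimal strategy $\Opt{\strat}$, and a target $\zeta \in [k]$. Following the standard congestion-game technique, for each agent $i$ I consider the deviation to its optimal strategy $\Opt{\strat}_i$; more generally, for each group $\Gamma$ of size $\zeta$, the deviation $(\Opt{\strat}_\Gamma, \KSNash{\strat}_{-\Gamma})$ is feasible, so \eqref{eq:def_ksnash} gives $C(\KSNash{\strat}) \le C(\Opt{\strat}_\Gamma,\KSNash{\strat}_{-\Gamma})$. Summing this inequality over all $\binom{n}{\zeta}$ groups $\Gamma \in \Combos_N^\zeta$ produces, on each resource $e$, a combinatorial identity: the right-hand side counts configurations indexed by how many of the $|\KSNash{\strat}|_e$ equilibrium users and $|\Opt{\strat}|_e$ optimal users of $e$ lie inside the deviating group — this is exactly where the binomial coefficients $\binom{a}{\psi}\binom{b}{\omega}\binom{n-a-b}{\zeta-\psi-\omega}$ arise, with $a = |\KSNash{\strat}|_e$, $b = |\Opt{\strat}|_e$, and $x$ accounting for users of $e$ common to both profiles restricted off the group. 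Introducing the primal LP variables $\rho$ (the PoA bound) and $\nu$ (the slack multiplier on $C(\KSNash{\strat})$), the constraint in \eqref{opt:ub} is precisely the requirement that the per-resource inequality $C(\KSNash{\strat}) - \rho\, C(\Opt{\strat}) \le 0$ is dominated, resource by resource, by a non-negative combination of the summed deviation inequalities. Weak LP duality (or a direct summation over resources) then yields $C(\KSNash{\strat}) \le \rho\, C(\Opt{\strat})$ for any feasible $\rho = P_\zeta^\star$, and optimizing $\zeta$ gives the stated bound. Care is needed to verify that the index set $\mathcal{I}$ of triples $(a,x,b)$ is rich enough to capture every resource configuration and that the substitution $c^j(x) = x\ell^j(x)$ correctly converts latency sums into the system cost $C$.

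For the \textbf{lower bound} ($\spoa_k \ge \max_{c}1/Q_k^\star$), the plan is the usual converse: construct an explicit congestion game whose worst-case equilibrium realizes the LP value. Given an optimal dual solution $\theta$ to \eqref{opt:lb} — interpreted as a distribution over resource configurations $(a,x,b)$ — I would build a symmetric instance (in the spirit of the classical lower-bound gadgets for affine/polynomial congestion games) with a single basis cost $c$, in which resources are replicated in proportions $\theta(a,x,b)$, and a pair of strategy profiles $\KSNash{\strat}$, $\Opt{\strat}$ is arranged so that $\KSNash{\strat}$ satisfies \eqref{eq:def_ksnash} for every $\zeta \in \{1,\dots,k\}$ (this is exactly the role of the $\zeta$-indexed constraint in \eqref{opt:lb}) while the cost ratio equals $\sum c(b+x)\theta / \sum c(a+x)\theta = Q_k^\star(n,c)$; taking the best basis function $c$ gives the bound. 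The main subtlety here is that showing $\KSNash{\strat}$ is genuinely $k$-strong — not merely Nash — requires checking resistance to \emph{all} coalitional deviations of size up to $k$, which in a symmetric construction reduces to the finite family of inequalities encoded by the $\theta$-weighted constraints, but one must confirm that an arbitrary (not necessarily ``symmetric'') coalitional deviation cannot do better; I expect this reduction to symmetric deviations to be the main obstacle, and I would handle it by a convexity/averaging argument exploiting the per-resource additivity of $C$ together with the symmetry of the instance.

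Overall, the hard part is not the LP manipulation but the combinatorial bookkeeping in the summation over coalitions: verifying that summing \eqref{eq:def_ksnash} over all size-$\zeta$ groups yields exactly the coefficient pattern $\binom{n}{\zeta}c(a+x) - \sum \binom{a}{\psi}\binom{b}{\omega}\binom{n-a-b}{\zeta-\psi-\omega}\,c(a+x+\omega-\psi)$ appearing in both LPs, and that this identity is tight in the lower-bound construction. Once that bridge is in place, both directions follow from weak and strong LP duality applied to the pair \eqref{opt:ub}--\eqref{opt:lb}.
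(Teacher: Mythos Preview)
Your proposal is essentially correct and tracks the paper's own argument: the upper bound is exactly the smoothness-style computation you describe (sum \eqref{eq:def_ksnash} over all $\binom{n}{\zeta}$ size-$\zeta$ coalitions, parameterize resources by the triple $(a,x,b)$, reduce to per-label inequalities, then optimize the ratio $\lambda_\zeta/(1+\mu_\zeta)$ via the substitution $\rho=(1+\mu_\zeta)/\lambda_\zeta$, $\nu=1/\bigl(\binom{n}{\zeta}\lambda_\zeta\bigr)$), and the lower bound is an explicit worst-case construction built from the optimal $\theta^\star$ of \eqref{opt:lb}.

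Two points worth sharpening. First, your labeling is slightly off: in the paper $a_e$ counts users of $e$ in $\KSNash{\strat}$ \emph{only}, $b_e$ in $\Opt{\strat}$ only, $x_e$ in both, so $|\KSNash{\strat}|_e=a_e+x_e$, not $a_e$. Second, and more substantively, the obstacle you correctly flag in the lower bound---that $\KSNash{\strat}$ must resist \emph{every} coalitional deviation, not just averaged ones---is handled in the paper not by a convexity argument but by brute-force symmetrization: for each label $(a,x,b)$ and each of the $n!$ permutations of the $n$ players, the construction lays down a separate ring of $n$ resources (so $n\cdot n!\cdot|\mathcal{I}|$ resources total), each with weight $\theta^\star(a,x,b)$. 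With this replication, the cost after a deviation by any fixed group of size $\zeta$ is literally the same number, so the individual $k$-strong Nash inequalities coincide with the aggregate constraints in \eqref{opt:lb}; moreover, since each player has only two actions, a group deviation in which only a proper subset switches is already covered by a smaller-$\zeta$ constraint. Finally, your closing sentence that both directions follow from ``weak and strong LP duality applied to the pair \eqref{opt:ub}--\eqref{opt:lb}'' is not right: these two programs are not a primal--dual pair, and the paper does not claim the upper and lower bounds coincide in general (only that they happen to in the plotted examples).
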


In total, we provide $k+|\Latset|$ linear programs to solve with $\mathcal{O}(n^3|\Latset|)$ constraints and decision variables.
The first set of $k$ programs \eqref{opt:ub} each provide an upper bound on the \kstrong price of anarchy, where \eqref{eq:thm_alt} simply states that the lowest such bound will provide the best guarantee on $\spoa_k(\gee_\Latset^n)$.
Similarly, the linear programs \eqref{opt:lb} constructs $|\Latset|$ examples with large \kstrong price of anarchy.
In \cref{fig:spoa_grid}, we demonstrate the results of these plots for four classes of congestion games: affine, quadratic, quartic, and asymptotic.
In each plot, only one line is shown as the two bounds coincide for each setting.
This indicates that our approach provides extremely relevant bounds on the \kstrong price of anarchy, which can be used to understand the opportunities to reduce inefficiency by partially coordinating agent behavior.

\noindent\emph{Proof of Theorem~\ref{thm:alt}:}

We deconstruct the proof into two parts: the proof of the upper bound and the proof of the lower bound:

\noindent\textbf{Proof of upper bound:} To prove the upper bound on $\spoa_k(\gee_\Latset^n)$, observe that if there exist vectors $\lambda,\mu \in \mathbb{R}^k_{\geq 0}$ such that for any pair of joint strategies $\strat,\strat^\prime \in \Strats$,
\begin{equation}\label{eq:smooth_def}
\frac{1}{\binom{n}{\zeta}} \sum_{\Gamma \in \Combos_N^{\zeta}} C(\strat_\Gamma^\prime,\strat_{-\Gamma}) \leq \lambda_\zeta C(\strat^\prime) - \mu_{\zeta}C(\strat),~\forall \zeta \in [k],
\end{equation}
then the following upper bounds hold on the system cost of a \kstrong Nash equilibrium:
\begin{subequations}
\begin{align}
C(\KSNash{\strat}) &= \frac{1}{\binom{n}{\zeta}} \sum_{\Gamma \in \Combos_N^\zeta} C(\KSNash{\strat})\label{eq:smooth_proof_a}\\
&\leq \frac{1}{\binom{n}{\zeta}} \sum_{\Gamma \in \Combos_N^\zeta} C(\Opt{\strat}_\Gamma, \KSNash{\strat}_{-\Gamma})\label{eq:smooth_proof_b}\\
&\leq \lambda_\zeta C(\Opt{\strat}) - \mu_{\zeta}C(\KSNash{\strat}).\label{eq:smooth_proof_c}
\end{align}
\end{subequations}
Rearranging terms gives that any altruistic congestion game satisfying \eqref{eq:smooth_def} has
\begin{equation}
    \frac{C(\KSNash{\strat})}{C(\Opt{\strat})} \leq \frac{\lambda_\zeta}{1+\mu_\zeta} \forall \zeta \in [k],
\end{equation}
for all $\KSNash{\strat} \in \KSNE$, i.e., $\spoa_k(G,C)$ is upper bounded by a simple function of $\lambda$ and $\mu$.
With this, our search for an upper bound reduces to a search for vectors $\lambda$ and $\mu$ that satisfy \eqref{eq:smooth_def} for each altruistic congestion game $(G,C) \in \gee_\Latset^n$.

The final step of the proof of the lower bound is to construct tractable linear programs whose solutions proved a $\lambda$ and $\mu$ that satisfy \eqref{eq:smooth_def}.
Consider an altruistic congestion game $(G,C)$ and any two joint strategies $\strat,\strat^\prime \in \Strats$.
To each resource $e \in \mathbb{E}$, we assign the label $(a_e,x_e,b_e)$ defined by 
\begin{subequations}\label{eq:parameterization}
\begin{align}
    a_e &= \lvert\{ i \in N \mid e \in \strat_i \setminus \strat^\prime_i\}\rvert \\
    x_e &= \lvert\{ i \in N \mid e \in \strat_i \cap \strat^\prime_i\}\}\rvert \\
    b_e &= \lvert\{ i \in N \mid e \in \strat^\prime_i \setminus \strat_i\}\}\rvert.
\end{align}
\end{subequations}
That is, $a_e$ is the number of players who utilize resource $e$ in only joint strategy $\strat$, $b_e$ is the number of players who utilize resource $e$ in only joint strategy $\strat^\prime$, and $x_e$ is the number of players who utilize resource $e$ in both $\strat$ and $\strat^\prime$.
Let $\mathcal{I} := \{(a,x,b)\in \mathbb{N}_{\geq 0}^3 \mid 1 \leq a+x+b \leq n\}$ denote the set of all possible labels in a game that has $n$ players.
Further, let $\mathcal{E}_{(a,x,b)} = \{e \in \mathcal{E} \mid (a_e,x_e,b_e)=(a,x,b)\}$ denote the set of resources with label $(a,x,b)$, and let $\theta(a,x,b,j) = \sum_{e \in \mathcal{E}_{a,x,b}} \alpha_e^j$ denote the aggregate cost scaling factor for the basis latency function $c^j$ on resources with label $(a,x,b)$.
$\theta$ is a vector in $\mathbb{R}_{\geq 0}^{|\Latset|\cdot|\mathcal{I}|}$ and will be used to rewrite \eqref{eq:smooth_def}.

Notice that the term $C(\strat^\prime) = \sum_{e \in \mathcal{E}}\sum_{j \in |\Latset|} \alpha_e^j c^j(|\strat|_e)$ depends only on the number of players utilizing resource $e$.
Using the aforementioned parameterization, we can write $|\strat^\prime|_e = b_e+x_e$ as the number of agents using resource $e$ in joint action $\strat^\prime$.
Rearranging terms allows us to rewrite the optimal total congestion further as
\begin{align*}
    C(\strat^\prime) &= \sum_{e \in \mathcal{E}}\sum_{j \in |\Latset|} \alpha_e^j c^j(|\strat|_e)\\
        &=\sum_{a,x,b,j}\sum_{e \in \mathcal{E}_{(a,x,b)}} \alpha_e^j c^j(b+x)\\
        &=\sum_{a,x,b,j}c^j(b+x)\sum_{e \in \mathcal{E}_{(a,x,b)}} \alpha_e^j\\
        &=\sum_{a,x,b,j}c^j(b+x)\theta(a,x,b,j).
\end{align*}
when omitted, it is assumed that a sum over $a,x,b,j$ is over all labels in $\mathcal{I}$ and basis cost function indicies $j \in [|\Latset|]$.
We can similarly rewrite the system cost $C(\strat)$ using the same parameterization as $\sum_{a,x,b,j}c^j(a+x)\theta(a,x,b,j)$.

We similarly transcribe the final term in \eqref{eq:smooth_def} $\sum_{\Gamma \in \Combos_N^\zeta} C(\strat_\Gamma^\prime,\strat_{-\Gamma})$ using the same parameterization.
\begin{align*}
&\sum_{\Gamma \in \Combos_N^\zeta}  C(\strat^\prime_\Gamma,\strat_{-\Gamma})\\
 &= \sum_{\Gamma\in\Combos_N^\zeta} \sum_{a,x,b,j} \sum_{e \in \mathcal{E}_{(a,x,b)}} \alpha_e^j c^j(|\strat^\prime_\Gamma,\strat_{-\Gamma}|_e)\\
&= \sum_{a,x,b,j} \sum_{e \in \mathcal{E}_{(a,x,b)}} \alpha_e^j \sum_{\Gamma\in\Combos_N^\zeta} c^j(|\strat^\prime_\Gamma,\strat_{-\Gamma}|_e)\\
&=  \hs \hs  \hs \sum_{a,x,b,j}  \hs  \hs \hs  \theta(a,x,b,j) \hs \hs  \hs  \hs \sum_{\substack{0 \leq \psi \leq e\\ 0\leq \omega \leq b \\ \psi+\omega \leq \zeta}} \hs  \hspace{-2mm} {a\choose \psi}  \hs {b \choose \omega} \hs \binom{n\hs - \hs a\hs - \hs b}{\zeta \hs - \hs \psi \hs - \hs \omega}c^j(a  \hs  \hs + \hs  \hs  x  \hs  \hs + \hs  \hs  \omega \hs  \hs - \hs \hs  \psi)
\end{align*}
where the set of coalitions $\Combos_N^\zeta$ was partitioned according to the action profile of the agents in each coalition.
We let $\psi$ denote the number of agents in $\Gamma$ that utilize resource $e$ only in joint action $\strat$ and $\omega$ the number of agents in $\Gamma$ that utilize $e$ only in joint action $\strat^\prime$.
By simple counting arguments, there are exactly ${a\choose \psi} {b \choose \omega}\binom{n -  a -  b}{\zeta  - \psi -  \omega}$ coalitions grouped with the same $\psi$ and $\omega$.
This decomposition is possible as the number of agents utilizing resource $e$ after a group $\Gamma$ deviates is precisely $a+x+\omega-\psi$.

The inequality \eqref{eq:smooth_def} now becomes
\begin{align*}
    &\frac{1}{\binom{n}{\zeta}}\hs\hs \hs \sum_{a,x,b,j}\hs\hs\hspace{-2pt} \theta(a,x,b,j) \hs\hs\hs\hs\hs\sum_{\substack{0 \leq \psi \leq e\\ 0\leq \omega \leq b \\ \psi+\omega \leq \zeta}} \hspace{-2mm}\hspace{-2pt} {a\choose \psi} \hs\hs {b \choose \omega}\hs\hs\binom{n\hs - \hs a\hs - \hs b}{\zeta \hs - \hs \psi \hs - \hs \omega}c^j(a\hs\hs +\hs\hs  x\hs\hs  +\hs\hs \omega \hs\hs - \hs\hs\psi) \\
    &\leq \sum_{a,x,b,j}\hs \theta(a,x,b,j) \left( \lambda_\zeta c^j(b+x) \hs - \hs \mu_\zeta c^j(a+x)\right).
\end{align*}
We assume that each latency function is a non-negative linear combination of the basis latency functions, as such $\theta(a,x,b,j) \geq 0$ for each $(a,x,b) \in \mathcal{I}$ and $j \in [|\Latset|]$.
Rather than satisfy this one inequality, it is sufficient to satisfy the $|\Latset|\cdot|\mathcal{I}|$ inequalities
\begin{multline}\label{eq:param_constraint}
    \frac{1}{\binom{n}{\zeta}} \sum_{\substack{0 \leq \psi \leq e\\ 0\leq \omega \leq b \\ \psi+\omega \leq \zeta}} \hspace{-2mm} {a\choose \psi} {b \choose \omega}\binom{n\hs - \hs a\hs - \hs b}{\zeta \hs - \hs \psi \hs - \hs \omega}c(a + x + \omega - \psi) \\ \leq \lambda_\zeta c(b+x) - \mu_\zeta c(a+x), \forall (a,x,b) \in \mathcal{I},~ c \in \Latset.
\end{multline}

To find parameters $\lambda_\zeta$ and $\mu_\zeta$ that provide the most informative \kstrong price of anarchy upper bound, we formulate the following optimization problem:
\begin{align}
\min_{\lambda_\zeta,\mu_\zeta \geq 0} \quad & \frac{\lambda_\zeta}{1+\mu_\zeta}\tag{P1$\zeta$}\label{opt:smooth_comp}\\
{\rm s.t.} \quad &~~ \eqref{eq:param_constraint}\nonumber
\end{align}
Finally, we transform \eqref{opt:smooth_comp} by substituting new decision variables $\rho = (1+\mu_\zeta)/\lambda_\zeta$ and $\nu = 1/\left(\binom{n}{\zeta}\lambda_\zeta\right) \geq 0$.
The new objective becomes $1/\rho$.
Except in degenerate cases where the \kstrong price of anarchy is unbounded, $\rho > 0$; we can thus invert the objective and change the minimization to a maximization, giving \eqref{opt:lb}.

\noindent\textbf{Proof of upper bound:}
For each $c \in \Latset$ we will algorithmically construct an example which provides large \kstrong price of anarchy\footnote{further analysis that is omitted from this work shows that these examples give tight bounds for the classes of games with a single basis latency function.}.
Let $\hat{\gee}_c^n$ denote the class of congestion games with one basis latency function $c=x\ell(x)$ and each player possessing exactly two actions $\Strats_i = \{\KSNash{\strat}_i,\Opt{\strat_i}\}$.
The cost function of a resource $e \in \mathcal{E}$ in these games is parameterized only by a single scalar $\alpha_e$.
Consider the problem
\begin{align}\label{eq:OG_prog_construct}
\max_{G \in \hat{\gee}_c^n} \frac{\max_{\KSNash{\strat} \in \KSNE}C(\KSNash{\strat})}{\min_{\Opt{\strat}\in\Strats}C(\Opt{\strat})}
\end{align}
whose solution is the upper bound on the \kstrong\ price of anarchy for $\hat{\gee}_c^n$.
We will make a series of reductions and generalizations to this program.

First, by leveraging the structure in $\hat{\gee}_c^n$ that each player has two actions, we can rewrite \eqref{eq:OG_prog_construct} as
\begin{align}\label{eq:OG_prog_construct_1}
\max_{G \in \hat{\gee}_c^n}~~~~&~~ \frac{C(\KSNash{\strat})}{C(\Opt{\strat})}\nonumber\\
{\rm s.t.}~~&~~ C(\KSNash{\strat}) \leq C(\Opt{\strat}_\Gamma,\KSNash{\strat}_{-\Gamma}),~\forall \Gamma \in \Combos_N^{[k]}.
\end{align}
Next, we generalize the constraint set.
For each $\zeta \in [k]$ we add the constraints associated with deviating groups of size $\zeta$ into a single constraint,
resulting in the problem 
\begin{align}\label{eq:OG_prog_construct_2}
\max_{G \in \hat{\gee}_c^n}~~~~&~~ \frac{C(\KSNash{\strat})}{C(\Opt{\strat})}\nonumber\\
{\rm s.t.}~~&~~ \binom{n}{\zeta}C(\KSNash{\strat}) \leq \sum_{\Gamma \in \Combos_N^\zeta} C(\strat_\Gamma^\prime,\KSNash{\strat}_{-\Gamma}),~\forall \zeta\in [k].
\end{align}
The feasible set of \eqref{eq:OG_prog_construct_2} subsumes that of \eqref{eq:OG_prog_construct_1}, and thus \eqref{eq:OG_prog_construct_2} provides an upper bound on \eqref{eq:OG_prog_construct_1} and the original problem \eqref{eq:OG_prog_construct}.

Finally, we alter \eqref{eq:OG_prog_construct_2} by the following: add the constraint that $C(\KSNash{\strat})=1$, which does not alter the value of the problem as resource values $\alpha_e$ can be scaled without altering the ratio, and invert the objective while turning the maximization to a minimization, which also does not alter the value of the problem as $C(\Opt{s}) \geq 0$.
These changes give the program
\begin{align}\label{eq:OG_prog_construct_3}
\min_{G \in \hat{\gee}_n}~~~~&~~ C(\Opt{\strat})\nonumber\\
{\rm s.t.}~~&~~ \binom{n}{\zeta}C(\KSNash{\strat}) \leq \sum_{\Gamma \in \Combos_N^\zeta} C(\Opt{\strat}_\Gamma,\KSNash{\strat}_{-\Gamma}),\forall \zeta\in [k],\nonumber\\
&~~C(\KSNash{\strat}) = 1.
\end{align}
In the next step of this proof, we will parameterize the terms in \eqref{eq:OG_prog_construct_3} using the parameterization from the first part of this proof.

By design, for each congestion game $G \in \hat{\gee}_c^n$ each player $i \in N$ has exactly two actions $\KSNash{\strat}_i \subseteq \mathcal{E}$ and $\Opt{\strat}_i \subseteq \mathcal{E}$.
To each resource $e \in \mathbb{E}$, we assign the label $(a_e,x_e,b_e)$ defined as in \eqref{eq:parameterization} where $\strat = \KSNash{\strat}$ and $\strat^\prime = \Opt{\strat}$.
Let $\mathcal{I} := \{(a,x,b)\in \mathbb{N}_{\geq 0}^3 \mid 1 \leq a+x+b \leq n\}$ denote the set of all possible labels in $\hat{\gee}_c^n$.
Recall the parameter $\theta(a,x,b) = \sum_{e \in \mathcal{E}_{a,x,b}} \alpha_e$ that is a vector in $\mathbb{R}_{\geq 0}^{|\mathcal{I}|}$.
Using the parameterized terms of $C(\KSNash{\strat})$, $C(\Opt{\strat})$, and $\sum_{\Gamma \in \Combos_N^\zeta} C(\Opt{\strat}_\Gamma,\KSNash{\strat}_{-\Gamma})$ shown in the first part of this proof, \eqref{eq:OG_prog_construct_3} can be rewritten as \eqref{opt:lb}.
By optimizing over $\theta$, the program searches for the worst-case price of anarchy over grouped resource values.
We point out that \eqref{opt:lb} is a linear program in $\theta$ with $k$ linear inequality constraints and one linear equality constraint.
By the transformation steps in the first part of this proof, \eqref{opt:lb} has been show to provide an upper bound on the \kstrong\ price of anarchy in $\hat{\gee}_c^n$; in the following step, we will provide a construction which shows this bound is tight.

Consider the following congestion game: for each label $(a,x,b) \in \mathcal{I}$ and permutation of the $n$ players $\sigma \in {\Sigma}_n$, define a ring of $n$ resources.
Total, there are $nn!|\mathcal{I}|$ resources.
Let $e_{i,j}^{(a,x,b)}$ denote the resource with label $(a,x,b)$ at position $i$ in the $j$th ring.
In \cref{fig:rings}, we illustrate the first indices of the $n!$ rings associated with the label $(a,x,b)=(2,1,1)$.

In the constructed congestion game, let $$\KSNash{\strat}_i = \bigcup_{\substack{1\leq j \leq n! \\ (a,x,b) \in \mathcal{I}}} e_{\sigma(i),j}^{(a,x,b)} \cup \ldots \cup e_{(\sigma(i) + a + x -1)\% n,j}^{(a,x,b)}$$ that is, in each ring $j$ with label $(a,x,b)$, at position $\sigma(i)$ player $i$ uses resource $e_{\sigma(i),j}^{(a,x,b)}$ as well as the next $a+x-1$ resources in the ring (for a total of $a+x$).
If these indices surpass the number of resources in the ring (i.e., $\sigma(i)+a+x-1 > n$, then the incrementing restarts at the beginning of the ring (hence the modulus operator $\%$).
Similarly, for the optimal actions, let $$\Opt{\strat}_i = \bigcup_{\substack{1\leq j \leq n! \\ (a,x,b) \in \mathcal{I}}} e_{(\sigma(i)+a)\% n,j}^{(a,x,b)} \cup \ldots \cup e_{(\sigma(i) + a + x +b -1)\% n,j}^{(a,x,b)}$$
with the same repeating pattern around the ring.
Constructing the action sets in this way enforces that each resource in a ring with label $(a,x,b)$ is utilized by $a+x$ players in the action $\KSNash{\strat}$ and $b+x$ players in the action $\Opt{\strat}$.

To finish constructing the example, let each resource $e \in \mathcal{E}_{(a,x,b)}$ have a cost scaling factor $\alpha_e = \theta(a,x,b)$, where we can select any $\theta \in \mathbb{R}_{\geq 0}^{|I|}$.
We will call this construction $G_\theta$.

\begin{figure}
    \centering
    \vspace{10pt}
    \includegraphics[width=0.4625\textwidth]{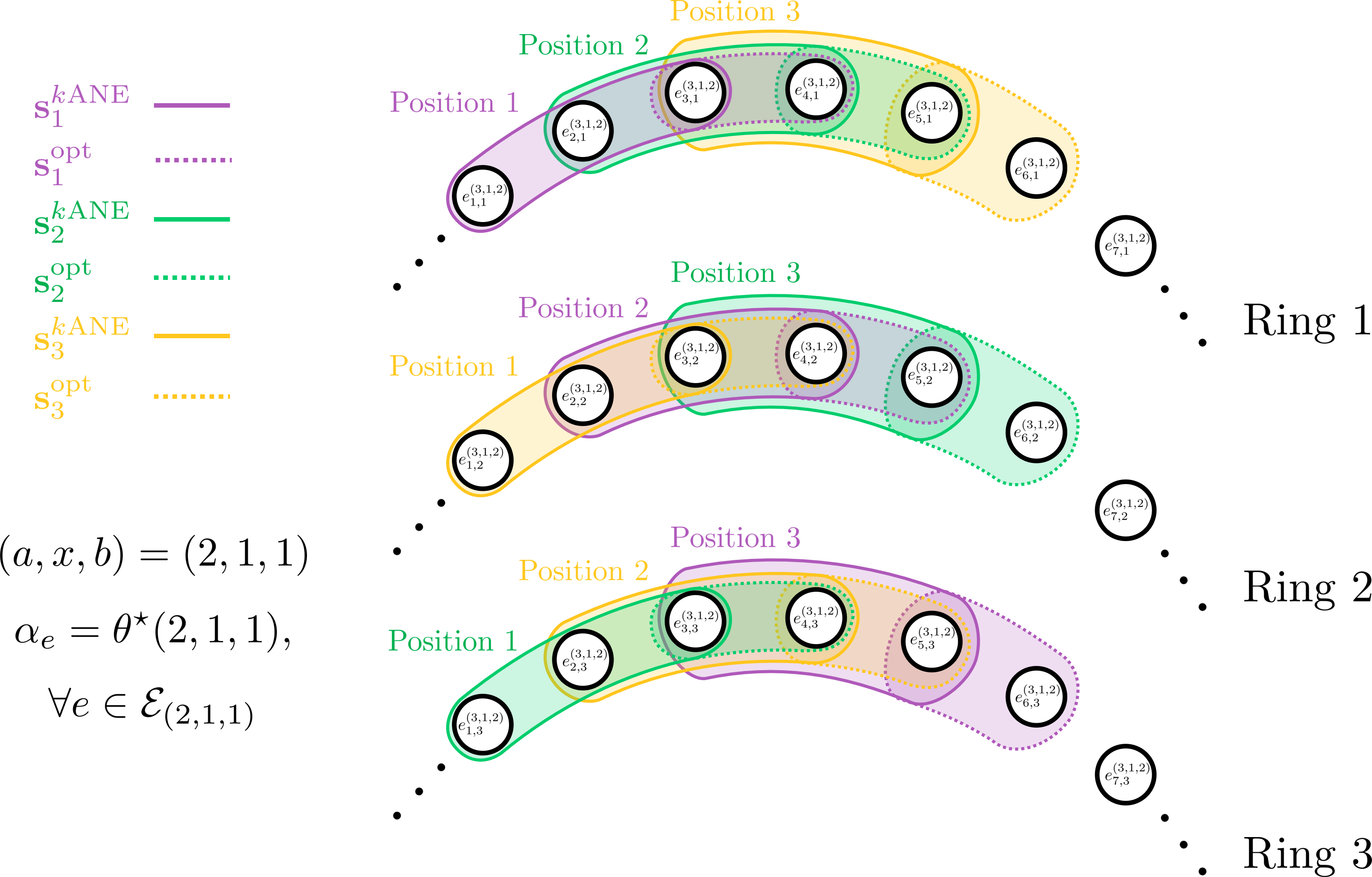}
    \caption{Game construction for worst-case \kstrong\ price of anarchy. Three of the $n$ players' action sets are shown (color-coded in yellow, green, and purple, respectively) on three of $n!$ rings for the label $(a,x,b)=(2,1,1)$. A ring has $n$ positions, one for each player. For a label $(a,x,b)$, we generate $n!$ rings for all the orderings of players over positions. This is repeated for each label. Players still only have two actions, but each action covers resources from each ring. The value of a resource with label $(a,x,b)$ is equal to a value of $\theta^\star(a,x,b)$, which we can set as equal to a solution to \eqref{opt:lb}.}
    \label{fig:rings}
    \vspace{-10pt}
\end{figure}

We will now investigate the \kstrong\ price of anarchy in the construction $G_\theta$ with latency function $\ell$ and local cost function $c(x) = x\ell(x)$.
In the congestion game $(G_\theta,C)$, the total latency in the action $\KSNash{\strat}$ is
\begin{equation}\label{eq:C_ksne_welf}
C(\KSNash{\strat}) = \sum_{a,x,b}n n!\theta(a,x,b)c(a+x).
\end{equation}
Similarly, joint strategy $\Opt{\strat}$ is
\begin{equation}\label{eq:C_opt_welf}
	C(\Opt{\strat}) = \sum_{a,x,b} n n!\theta(a,x,b)c(b+x).
\end{equation}
The system cost of a group $\Gamma$ deviating their action to $\Opt{\strat}_\Gamma$ from the joint strategy $\KSNash{\strat}$ is
\begin{align}
&C(\Opt{\strat}_\Gamma,\KSNash{\strat}_{-\Gamma}) = \sum_{a,x,b}\sum_{j=1}^{n!} \sum_{i=1}^n \theta(a,x,b)c(|\Opt{\strat}_\Gamma,\KSNash{\strat}_{-\Gamma}|_e)\nonumber\\
&\hs \hs =\hs \sum_{a,x,b}\hs\theta(a,x,b)\hs \hs \sum_{\substack{0\leq\psi\leq a\\ 0\leq\omega\leq b\\ \psi+\omega \leq \zeta }} \hs \hs \hs \hs nn! \hs  {a\choose \psi} \hs \hs {b \choose \omega} \hs \hs \binom{n\hs - \hs a\hs - \hs b}{\zeta \hs - \hs \psi \hs - \hs \omega} c (a\hs \hs +\hs \hs x\hs \hs +\hs \hs \omega\hs \hs -\hs \psi)\label{eq:C_gamma_dev}
\end{align}
where we let $e$ be the shorthand for $e_{i,j}^{(a,x,b)}$.
The second equality holds by defining $\psi$ and $\omega$ as the number of players in $\Gamma$ who invested in resource $e$ exclusively in their action $\KSNash{\strat}$ or $\Opt{\strat}$ respectively.
By counting arguments, there are exactly $\binom{a}{\psi}\binom{b}{\omega}\binom{n-a-b}{\zeta-\psi-\omega}$ positions for the players in $\Gamma$ which yield the profile $(\psi,\omega)$ for a resource at some fixed position in the ring, there are $\zeta!$ ways to order the players in $\Gamma$, $(n-\zeta)!$ ways to order the players not in $\Gamma$, and $n$ resource in each ring.
Due to the symmetry of the game, the system cost after a deviation is the same for any group $\Gamma \in \Combos_N^k$.

Now, we provide a condition on $\theta$ such that $\KSNash{\strat}$ is a \kstrong\ Nash equilibrium: we constrain that $\Phi(\KSNash{\strat})$ must be no more than \eqref{eq:C_gamma_dev},
\begin{multline}
    \sum_{a,x,b} n n!\theta(a,x,b)c(a+x) \leq\\ \sum_{a,x,b} \theta(a,x,b)  \hs\sum_{\substack{0\leq\psi\leq a\\ 0\leq\omega\leq b\\ \psi+\omega \leq \zeta }} \hs \hs nn!  {a\choose \psi} {b \choose \omega}\binom{n\hs - \hs a\hs - \hs b}{\zeta \hs - \hs \psi \hs - \hs \omega} c (a\hs +\hs x\hs +\hs \omega\hs -\hs \psi)\\\forall \zeta \in [k].
\end{multline}
\sloppy Canceling $nn!$ gives the $k$ linear inequality constraints of \eqref{opt:lb}, i.e., $\theta$ lives in the same feasible set as that of \eqref{opt:lb}.

In a congestion game $(G_\theta,C)$, the system cost of $\Opt{\strat}$ is $\sum_{a,x,b} nn! \theta(a,x,b) c(b+x)$.
As this system cost upper bounds the optimal system cost, and $C(\KSNash{\strat})$ lower bounds the worst-case equilibrium cost when $\theta$ in the feasible set,
\begin{multline*}
    \frac{C(\KSNash{\strat})}{C(\Opt{\strat})} = \frac{\sum_{a,x,b}nn!\theta(a,x,b)c(a+x)}{\sum_{a,x,b}nn!\theta(a,x,b)c(b+x)}\\
    = \frac{1}{\sum_{a,x,b}\theta(a,x,b)c(b+x)} \leq \spoa_k(G_\theta,C)
\end{multline*}
where the second equality holds from canceling $nn!$ and the normalizing constraint $\sum_{a,x,b}\theta(a,x,b)c(a+x) = 1$.
With this, we have the following bounds
$$\frac{1}{Q_k^\star(n,c)}\geq \spoa_k(G_\theta,C) \geq \frac{1}{\sum_{a,x,b}\theta(a,x,b)c(b+x)},$$
for all feasible $\theta$.
The feasible set of \eqref{opt:lb} is non-empty from the existence of equilibria shown in \cref{prop:exist}.
When \eqref{opt:lb} is non-zero, selecting $\theta = \theta^\star$ as a solution of \eqref{opt:lb} shows that $1/Q_k^\star(n,c)$ must be tight over $\gee_n$.
If \eqref{opt:lb} is zero, then the \kstrong\ price of anarchy is unbounded.
\hfill$\qed$

\section*{Conclusion}
In this work, we studied the effect of collaborative decision-making on multi-agent system inefficiency by bounding the price of anarchy with respect to $k$-strong Nash equilibria.
Two linear programs were provided that give bounds on the price of anarchy in settings where groups of agents are altruistic in optimizing the system objective.
Using these tools, we are able to gain insights into the degree to which decentralized decision-making causes system inefficiency.
Future work will study less restrictive coalition structures, consider additional forms of collaborations between agents, and consider the transient effects of coalitional decision-making rather than just the equilibrium.


\bibliographystyle{IEEEtran} 
\bibliography{../../../My_Library}

\begin{thebibliography}{10}
\providecommand{\url}[1]{#1}
\csname url@samestyle\endcsname
\providecommand{\newblock}{\relax}
\providecommand{\bibinfo}[2]{#2}
\providecommand{\BIBentrySTDinterwordspacing}{\spaceskip=0pt\relax}
\providecommand{\BIBentryALTinterwordstretchfactor}{4}
\providecommand{\BIBentryALTinterwordspacing}{\spaceskip=\fontdimen2\font plus
\BIBentryALTinterwordstretchfactor\fontdimen3\font minus
  \fontdimen4\font\relax}
\providecommand{\BIBforeignlanguage}[2]{{%
\expandafter\ifx\csname l@#1\endcsname\relax
\typeout{** WARNING: IEEEtran.bst: No hyphenation pattern has been}%
\typeout{** loaded for the language `#1'. Using the pattern for}%
\typeout{** the default language instead.}%
\else
\language=\csname l@#1\endcsname
\fi
#2}}
\providecommand{\BIBdecl}{\relax}
\BIBdecl

\bibitem{Lazar2018}
D.~A. Lazar, S.~Coogan, and R.~Pedarsani, ``The {{Price}} of {{Anarchy}} for
  {{Transportation Networks}} with {{Mixed Autonomy}},'' in \emph{Proceedings
  of the {{American Control Conference}}}, 2018, pp. 6359--6365.

\bibitem{Perakis2007}
G.~Perakis and G.~Roels, ``The {{Price}} of {{Anarchy}} in {{Supply Chains}}:
  {{Quantifying}} the {{Efficiency}} of {{Price-Only Contracts}},''
  \emph{Management Science}, vol.~53, no.~8, pp. 1249--1268, Aug. 2007.

\bibitem{epstein2012price}
L.~Epstein and R.~{van Stee}, ``The price of anarchy on uniformly related
  machines revisited,'' \emph{Information and Computation}, vol. 212, pp.
  37--54, 2012.

\bibitem{Maille2012}
P.~Maill{\'e} and N.~E. {Stier-Moses}, ``Eliciting {{Coordination}} with
  {{Rebates}},'' \emph{Transportation Science}, vol.~43, no.~4, pp. 473--492,
  2009.

\bibitem{makabe2022utility}
A.~Makabe, T.~Wada, and Y.~Fujisaki, ``Utility design for infinite distributed
  welfare games,'' \emph{IFAC-PapersOnLine}, vol.~55, no.~25, pp. 223--228,
  2022.

\bibitem{Rosenthal1973}
R.~W. Rosenthal, ``A class of games possessing pure-strategy {{Nash}}
  equilibria,'' \emph{International Journal of Game Theory}, vol.~2, no.~1, pp.
  65--67, Dec. 1973.

\bibitem{Christodoulou2005}
G.~Christodoulou and E.~Koutsoupias, ``The price of anarchy of finite
  congestion games,'' in \emph{Proceedings of the {{Annual ACM Symposium}} on
  {{Theory}} of {{Computing}}}, 2005, pp. 67--73.

\bibitem{Koutsoupias2009}
E.~Koutsoupias and C.~Papadimitriou, ``Worst-{{Case Equilibria}},''
  \emph{Computer Science Review}, vol.~3, no.~2, pp. 65--69, May 2009.

\bibitem{Roughgarden2012}
T.~Roughgarden, ``Intrinsic robustness of the price of anarchy,''
  \emph{Communications of the ACM}, vol.~55, no.~7, pp. 116--123, 2012.

\bibitem{Paccagnan2021I}
D.~Paccagnan, R.~Chandan, B.~L. Ferguson, and J.~R. Marden, ``Optimal {{Taxes}}
  in {{Atomic Congestion Games}},'' \emph{ACM Trans. Econ. Comput.}, vol.~9,
  no.~3, Aug. 2021.

\bibitem{Bilo2016}
V.~Bil{\`o} and C.~Vinci, ``Dynamic taxes for polynomial congestion games,'' in
  \emph{{{EC}} 2016 - {{Proceedings}} of the 2016 {{ACM Conference}} on
  {{Economics}} and {{Computation}}}.\hskip 1em plus 0.5em minus 0.4em\relax
  New York, New York, USA: ACM Press, 2016, pp. 839--856.

\bibitem{Vijayalakshmi2020}
V.~Ravindran~Vijayalakshmi and A.~Skopalik, ``Improving {{Approximate Pure Nash
  Equilibria}} in {{Congestion Games}},'' in \emph{Web and {{Internet
  Economics}}}, X.~Chen, N.~Gravin, M.~Hoefer, and R.~Mehta, Eds.\hskip 1em
  plus 0.5em minus 0.4em\relax Cham: Springer International Publishing, 2020,
  pp. 280--294.

\bibitem{Castiglioni2020}
M.~Castiglioni, A.~Celli, A.~Marchesi, and N.~Gatti, ``Signaling in {{Bayesian
  Network Congestion Games}}: The {{Subtle Power}} of {{Symmetry}},''
  \emph{Proceedings of the AAAI Conference on Artificial Intelligence},
  vol.~35, no.~6, pp. 5252--5259, May 2021.

\bibitem{caragiannisTaxesLinearAtomic2006}
I.~Caragiannis, C.~Kaklamanis, and P.~Kanellopoulos, ``Taxes for {{Linear
  Atomic Congestion Games}}.''\hskip 1em plus 0.5em minus 0.4em\relax Springer,
  Berlin, Heidelberg, 2006, pp. 184--195.

\bibitem{Paccagnan2021a}
D.~Paccagnan and M.~Gairing, ``In {{Congestion Games}}, {{Taxes Achieve Optimal
  Approximation}},'' \emph{ArXiV}, May 2021.

\bibitem{chau2022DecentralizedRideSharingVehiclePooling}
S.~C.-K. Chau, S.~Shen, and Y.~Zhou, ``Decentralized {{Ride-Sharing}} and
  {{Vehicle-Pooling Based}} on {{Fair Cost-Sharing Mechanisms}},'' \emph{IEEE
  Transactions on Intelligent Transportation Systems}, vol.~23, no.~3, pp.
  1936--1946, Mar. 2022.

\bibitem{bistaffa2012decentralised}
F.~Bistaffa, A.~Farinelli, M.~Vinyals, and A.~Rogers, ``Decentralised stable
  coalition formation among energy consumers in the smart grid,'' in
  \emph{{{AAMAS}}}.\hskip 1em plus 0.5em minus 0.4em\relax Citeseer, 2012, pp.
  1461--1462.

\bibitem{das2019tarmac}
A.~Das, T.~Gervet, J.~Romoff, D.~Batra, D.~Parikh, M.~Rabbat, and J.~Pineau,
  ``Tarmac: {{Targeted}} multi-agent communication,'' in \emph{International
  Conference on Machine Learning}.\hskip 1em plus 0.5em minus 0.4em\relax PMLR,
  2019, pp. 1538--1546.

\bibitem{chen2014AltruismItsImpact}
P.-A. Chen, B.~D. Keijzer, D.~Kempe, and G.~Sch{\"a}fer, ``Altruism and {{Its
  Impact}} on the {{Price}} of {{Anarchy}},'' \emph{ACM Transactions on
  Economics and Computation}, vol.~2, no.~4, pp. 17:1--17:45, Oct. 2014.

\bibitem{tang2018balanced}
P.~Tang and D.~Yu, ``Balanced outcomes in wage bargaining,'' in
  \emph{Proceedings of the 17th International Conference on Autonomous Agents
  and {{MultiAgent}} Systems}, 2018, pp. 2091--2093.

\bibitem{hoefer2013StrategicCooperationCost}
M.~Hoefer, ``Strategic cooperation in cost sharing games,'' \emph{International
  Journal of Game Theory}, vol.~42, no.~1, pp. 29--53, Feb. 2013.

\bibitem{bullinger2020ComputingDesirablePartitions}
M.~Bullinger, ``Computing {{Desirable Partitions}} in {{Coalition Formation
  Games}},'' in \emph{Proceedings of the 19th {{International Conference}} on
  {{Autonomous Agents}} and {{MultiAgent Systems}}}, ser. {{AAMAS}} '20.\hskip
  1em plus 0.5em minus 0.4em\relax Richland, SC: {International Foundation for
  Autonomous Agents and Multiagent Systems}, May 2020, pp. 2185--2187.

\bibitem{aumann1959acceptable}
R.~J. Aumann, ``Acceptable points in general cooperative n-person games,''
  \emph{Contributions to the Theory of Games}, vol.~4, pp. 287--324, 1959.

\bibitem{nessah2014existence}
R.~Nessah and G.~Tian, ``On the existence of strong {{Nash}} equilibria,''
  \emph{Journal of Mathematical Analysis and Applications}, vol. 414, no.~2,
  pp. 871--885, 2014.

\bibitem{harks2009strong}
T.~Harks, M.~Klimm, and R.~H. M{\"o}hring, ``Strong {{Nash}} equilibria in
  games with the lexicographical improvement property,'' in \emph{Internet and
  Network Economics: 5th International Workshop, {{WINE}} 2009, Rome, Italy,
  December 14-18, 2009. {{Proceedings}} 5}.\hskip 1em plus 0.5em minus
  0.4em\relax Springer, 2009, pp. 463--470.

\bibitem{holzman1997strong}
R.~Holzman and N.~{Law-Yone}, ``Strong equilibrium in congestion games,''
  \emph{Games and economic behavior}, vol.~21, no. 1-2, pp. 85--101, 1997.

\bibitem{gatti2017verification}
N.~Gatti, M.~Rocco, and T.~Sandholm, ``On the verification and computation of
  strong {{Nash}} equilibrium,'' \emph{arXiv preprint arXiv:1711.06318}, 2017.

\bibitem{clempner2020finding}
J.~B. Clempner and A.~S. Poznyak, ``Finding the strong nash equilibrium:
  {{Computation}}, existence and characterization for markov games,''
  \emph{Journal of Optimization Theory and Applications}, vol. 186, pp.
  1029--1052, 2020.

\bibitem{epsteinStrongEquilibriumCost2007}
A.~Epstein, M.~Feldman, and Y.~Mansour, ``Strong equilibrium in cost sharing
  connection games,'' in \emph{Proceedings of the 8th {{ACM}} Conference on
  {{Electronic}} Commerce}, ser. {{EC}} '07.\hskip 1em plus 0.5em minus
  0.4em\relax New York, NY, USA: Association for Computing Machinery, Jun.
  2007, pp. 84--92.

\bibitem{bachrachStrongPriceAnarchy2014}
Y.~Bachrach, V.~Syrgkanis, {\'E}.~Tardos, and M.~Vojnovi{\'c}, ``Strong
  {{Price}} of {{Anarchy}}, {{Utility Games}} and {{Coalitional Dynamics}},''
  in \emph{Algorithmic {{Game Theory}}}, ser. Lecture {{Notes}} in {{Computer
  Science}}, R.~Lavi, Ed.\hskip 1em plus 0.5em minus 0.4em\relax Berlin,
  Heidelberg: Springer, 2014, pp. 218--230.

\bibitem{fiat2007strong}
A.~Fiat, H.~Kaplan, M.~Levy, and S.~Olonetsky, ``Strong price of anarchy for
  machine load balancing,'' in \emph{{{ICALP}}}, vol. 4596.\hskip 1em plus
  0.5em minus 0.4em\relax Springer, 2007, pp. 583--594.

\bibitem{andelmanStrongPriceAnarchy2009}
N.~Andelman, M.~Feldman, and Y.~Mansour, ``Strong price of anarchy,''
  \emph{Games and Economic Behavior}, vol.~65, no.~2, pp. 289--317, Mar. 2009.

\bibitem{ferguson2023CollaborativeDecisionMakingKStrong}
B.~L. Ferguson, D.~Paccagnan, B.~S.~R. Pradelski, and J.~R. Marden,
  ``Collaborative {{Decision-Making}} and the k-{{Strong Price}} of {{Anarchy}}
  in {{Common Interest Games}},'' Nov. 2023.

\bibitem{chien2009strong}
S.~Chien and A.~Sinclair, ``Strong and pareto price of anarchy in congestion
  games.'' in \emph{{{ICALP}} (1)}.\hskip 1em plus 0.5em minus 0.4em\relax
  Citeseer, 2009, pp. 279--291.

\bibitem{ahangar2021survey}
M.~N. Ahangar, Q.~Z. Ahmed, F.~A. Khan, and M.~Hafeez, ``A survey of autonomous
  vehicles: {{Enabling}} communication technologies and challenges,''
  \emph{Sensors}, vol.~21, no.~3, p. 706, 2021.

\bibitem{nash1950equilibrium}
J.~F. Nash~Jr, ``Equilibrium points in n-person games,'' \emph{Proceedings of
  the national academy of sciences}, vol.~36, no.~1, pp. 48--49, 1950.

\bibitem{Monderer1996}
D.~Monderer and L.~S. Shapley, ``Potential {{Games}},'' \emph{Games and
  Economic Behavior}, vol.~14, no.~1, pp. 124--143, 1996.

\bibitem{Roughgarden2002}
T.~Roughgarden, ``How {{Bad Is Selfish Routing}}?'' \emph{J. ACM}, vol.~49,
  no.~2, pp. 236--259, 2002.

\bibitem{xu2022ResourceAwareDistributedSubmodular}
Z.~Xu and V.~Tzoumas, ``Resource-{{Aware Distributed Submodular Maximization}}:
  {{A Paradigm}} for {{Multi-Robot Decision-Making}},'' in \emph{2022 {{IEEE}}
  61st {{Conference}} on {{Decision}} and {{Control}} ({{CDC}})}, Dec. 2022,
  pp. 5959--5966.

\bibitem{Caragiannis2010}
I.~Caragiannis, C.~Kaklamanis, and P.~Kanellopoulos, ``Taxes for {{Linear
  Atomic Congestion Games}},'' \emph{ACM Trans. Algorithms}, vol.~7, no.~1, pp.
  13:1--13:31, Dec. 2010.

\bibitem{Aland2006}
S.~Aland, D.~Dumrauf, M.~Gairing, B.~Monien, and F.~Schoppmann, ``Exact
  {{Price}} of {{Anarchy}} for {{Polynomial Congestion Games}},'' in
  \emph{{{STACS}} 2006}, B.~Durand and W.~Thomas, Eds.\hskip 1em plus 0.5em
  minus 0.4em\relax Berlin, Heidelberg: Springer Berlin Heidelberg, 2006, pp.
  218--229.

\end{thebibliography}

\clearpage



\end{document}